\newcommand{\mbs}[1]{\bm{#1}}
\newcommand{\vect}[1]{{\lowercase{\mbs{#1}}}}
\newcommand{\mat}[1]{{\uppercase{\mbs{#1}}}}
\renewcommand{\Bmatrix}[1]{\begin{bmatrix}#1\end{bmatrix}}
\newcommand{\Pmatrix}[1]{\begin{matrix}#1\end{matrix}}
\newcommand{\T}{{\scriptscriptstyle\mathsf{T}}}
\renewcommand{\H}{{\scriptscriptstyle\mathsf{H}}}
\renewcommand{\Re}[1][]{\ifthenelse{\isempty{#1}}{\operatorname{Re}}{\operatorname{Re}\left(#1\right)}}
\renewcommand{\Im}[1][]{\ifthenelse{\isempty{#1}}{\operatorname{Im}}{\operatorname{Im}\left(#1\right)}}
\newcommand{\ev}{\vect{e}}
\newcommand{\hv}{\vect{h}}
\newcommand{\pv}{\vect{p}}
\newcommand{\qv}{\vect{q}}
\newcommand{\uv}{\vect{u}}
\newcommand{\vv}{\vect{v}}
\newcommand{\xv}{\vect{x}}
\newcommand{\yv}{\vect{y}}
\newcommand{\Am}{\mat{a}}
\newcommand{\Hm}{\mat{h}}
\newcommand{\Km}{\mat{k}}
\newcommand{\Cc}{{\mathcal C}}
\newcommand{\Dc}{{\mathcal D}}
\newcommand{\Uc}{{\mathcal U}}
\newcommand{\CC}{\mathbb{C}}
\newcommand{\CN}[1][]{\ifthenelse{\isempty{#1}}{\mathcal{N}_{\mathbb{C}}}{\mathcal{N}_{\mathbb{C}}\left(#1\right)}}
\renewcommand{\P}[1][]{\ifthenelse{\isempty{#1}}{\mathbb{P}}{\mathbb{P}\left(#1\right)}}
\newcommand{\E}[1][]{\ifthenelse{\isempty{#1}}{\mathbb{E}}{\mathbb{E}\left(#1\right)}}
\renewcommand{\det}[1][]{\ifthenelse{\isempty{#1}}{\mathrm{det}}{\text{det}\left(#1\right)}}
\newcommand{\trace}[1][]{\ifthenelse{\isempty{#1}}{\mathrm{tr}}{\text{tr}\left(#1\right)}}
\newcommand{\rank}[1][]{\ifthenelse{\isempty{#1}}{\mathrm{rank}}{\text{rank}\left(#1\right)}}
\newcommand{\diag}[1][]{\ifthenelse{\isempty{#1}}{\mathrm{diag}}{\text{diag}\left(#1\right)}}
\DeclarePairedDelimiter\Abs{\lvert}{\rvert^2}
\DeclarePairedDelimiter\norm{\lVert}{\rVert}
\DeclarePairedDelimiter\Norm{\lVert}{\rVert^2}
\newcommand{\defeq}{\triangleq}
\newcommand{\nn}{\nonumber}
\newcommand{\fb}{\mathrm{fb}}
\newcommand{\bh}{\mathrm{bh}}
\newcommand{\DoF}{\mathrm{DoF}}
\newtheorem{proposition}{Proposition}
\newtheorem{definition}{Definition}
\newtheorem{theorem}{Theorem}
\newtheorem{lemma}{Lemma}
\begin{document}
{\fontsize{10.5pt}{12.5pt}\selectfont

\title{The DoF of Network MIMO with Backhaul Delays}
\author{\IEEEauthorblockN{Xinping Yi, Paul de Kerret, and David Gesbert}

\IEEEauthorblockA{Mobile Communications Department \\
    EURECOM\\
06560, Sophia Antipolis, France\\
Email: \{xinping.yi, paul.dekerret, david.gesbert\}@eurecom.fr}}

\maketitle

\begin{abstract}
We consider the problem of downlink precoding for Network (multi-cell) MIMO networks where Transmitters (TXs)
are provided with imperfect Channel State Information (CSI). Specifically, each TX receives a delayed channel estimate with the delay being specific to {\em each channel component}. This model is particularly adapted to the scenarios where a user feeds back its CSI to its serving base only as it is envisioned in future LTE networks. We analyze the impact of the delay during the backhaul-based CSI exchange on the rate performance achieved by Network MIMO. We highlight how delay can dramatically degrade system performance if existing
precoding methods are to be used. We propose an alternative robust beamforming strategy which achieves the maximal performance, in DoF sense. We verify by simulations that the theoretical DoF improvement translates into a performance increase at finite Signal-to-Noise Ratio (SNR) as well.
\end{abstract}
\section{Introduction}
The recent years have witnessed considerable work and progress related to the use of multiple-antenna strategies in interference-limited wireless networks. In particular multiple-antenna schemes have proved quite powerful when combined with some form of cooperation or coordination across interfering devices \cite{Gesbert:2010,Karakayali:2006}. Nevertheless, some cost must be paid for extracting cooperation gains, in the form of information exchange, where such information can be CSI or user data related. In the case the network's backhaul (or specific privacy regulations) do not support the sharing of user data, a so-called interference channel arises whereby cooperative beamforming strategies can be implemented provided shared CSI is made available at the TXs. The strategies differ whether interference canceling is available at the user terminals (see e.g., spatial interference alignment \cite{Maddah-Ali:08,Cadambe:2008}) or is not \cite{Jorswieck:2008}. Substantial gains can be offered in the case user data sharing is allowed among transmitters (so-called Network or multi-cell MIMO), in particular in the sense that interference avoidance at transmitter alone is made possible with a reduced overall number of antennas \cite{Gesbert:2010}. Nevertheless real-time CSI sharing remains an important practical challenge for such systems. Previous work on limited CSI feedback model include the case of finite quantizing \cite{Love:08,Bhagavatula:11} and, recently, delayed feedback.

In \cite{Maddah-Ali:12}, it was shown that even completely stale channel feedback, referred to as ``delayed CSIT'', could be used to achieve a larger degrees of freedom (DoF), or pre-log factor. This is achieved by using a novel space-time interference alignment technique, referred to as ``MAT alignment''. One feature of this scheme is that the transmitter treats the delayed feedback as it is fully uncorrelated with the current channel (i.e., worst case scenario). This work was later extended to take into account the possible temporal correlation in the fading channel \cite{Yang:2012,Gou:2012}, resulting in a scheme (referred to in the following as the ``$\alpha$-MAT alignment'') bridging between conventional zero-forcing and the MAT algorithm (when delays exceed the channel's coherence period). All these works apply to uniform feedback delay settings, i.e. whereby all channel components are delayed by the same amount.

In the context of Network MIMO, as specified by current 4G standards, the situation is different as the standards impose that a user feeds back directly to its serving base only, while any further inter-cell CSI exchange must take place over some specific backhaul signaling channels. In reality, signaling over the backhaul introduces additional delays which further degrade the CSI reliability. Thus any CSIT pertaining to an interfering user is subject to a larger delay than that of the served user.
The analysis of Network MIMO with heterogeneous CSI delays does not follow from the homogeneous broadcast setting of \cite{Maddah-Ali:12,Yang:2012,Maleki:12,Xu:12} in any straightforward manner. Instead, a novel specific study of this problem is addressed in this paper.

More specifically, our contributions are as follows:
\begin{itemize}
  \item We adapt the~$\alpha$-MAT alignment developed in \cite{Yang:2012,Gou:2012} to this heterogeneous delay scenario, and show that the DoF achieved is then limited by the {worst} delay after which a channel estimate is obtained at the TX.
  \item Adapting a modified zero-forcing beamforming (ZFBF) scheme \cite{Kerret:2012}, we overcome the CSI discrepancy created by the backhaul delay and propose a DoF-optimal beamforming scheme for the specific two-cell setting.
\end{itemize}

\textbf{Notation}: Matrices and vectors are represented as uppercase and lowercase letters, and matrix transport and Hermitian transport are denoted by $\Am^\T$, $\Am^\H$, respectively. $\hv^{\bot}$ is the normalized orthogonal component of any nonzero vector $\hv$. The approximation $f(P) \sim g(P)$ is in the sense of $\lim_{P \to \infty} \frac{f(P)}{g(P)}=C$, where $C$ is a constant that does not scale as $P$.

\section{System Model}
\subsection{Channel Model}
We consider a two-cell network, where in each cell the transmitter (TX) with single-antenna serves one single-antenna receiver (RX). Both TXs share the user's data symbols and transmit jointly to the two RXs in a BC mode. {The RXs are assumed to have perfect instantaneous CSI relative to the multiuser channel.} The discrete time baseband signal received at RX-$j$ is given by
\begin{align}
y_j(t) &= \hv_{j}^\H(t) \xv(t) + z_j(t)
\end{align}
for any time instant $t$, where $\hv_{j}^\H(t)=[h_{j1}(t) \ h_{j2}(t)]$ is the concatenated channel vector from the two TXs to RX-$j$, $z_j(t) \sim \CN[0,1]$ is the normalized additive white Gaussian noise at RX-$j$, $\xv(t)=[x_1(t) \ x_2(t)]^\T$ is the input signal, where $x_i$ is sent from TX-$i$ and subject to the power constraint $\E \Norm{\xv(t)} \le P$, $\forall\,t$.

We assume the channel to be temporally-correlated. Under the first-order Gauss-Markov model, the channel evolves as~\cite{Tse:05}
\begin{align} \label{eq:G-M}
  \hv_j(t) = \rho \hv_j(t-\tau) - \sqrt{1-\rho^2} \ev(t)
\end{align}
where $\rho \defeq \E [][\hv^\H(t) \hv(t-\tau)] \in [0,1]$ is the channel correlation coefficient, and $\ev(t)$ is a zero-mean unit-variance complex Gaussian process, i.i.d.~across time. To ``fit'' the classical Clarke's isotropic scattering model \cite{Tse:05,Caire:2010}, we let $\rho = J_0(2 \pi f_d \tau)$, where $f_d$ is the Doppler spread, $\tau$ is the time lapse till the prediction, and $J_0(\cdot)$ is the zero-th order Bessel function of the first-kind.

\subsection{CSI Feedback Model}
The CSI feedback is first performed within one cell from RX to its own TX via a feedback link, followed by exchanging CSI between TXs over a backhaul link. We consider a symmetric setting such that there exist two kinds of delays in the CSI flow: the feedback delay, $\tau_{\fb}$, from the RX to its own TX, and the backhaul delay, $\tau_{\bh}$, between TXs. We define $\tau_{jk}$ as the total amount of delays from RX-$j$ to TX-$k$, such that
\begin{align} \label{eq:delays}
  \tau_{jk} = \left\{ \begin{array}{ll} \tau_{\fb}, & j=k \\ \tau_{\fb}+\tau_{\bh}, & j \ne k \end{array} \right..
\end{align}
Further, $\tau_{\fb}$ and $\tau_{\bh}$ are assumed to be known at both TXs. The network model is schematized in Fig.~1.

        \begin{figure}[htb]
        \begin{center}
        \includegraphics[width=0.8\columnwidth]{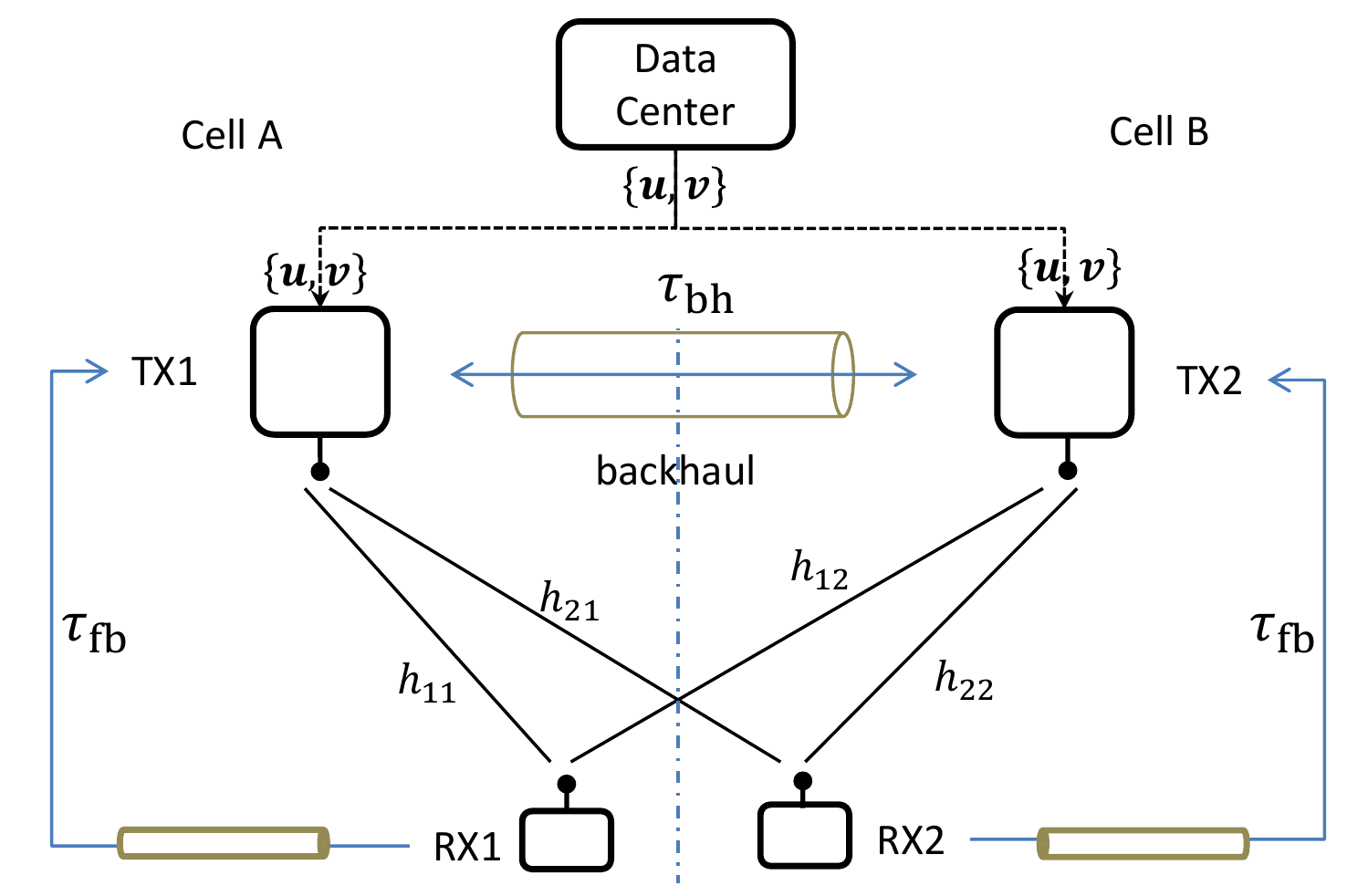}
        \caption{Network model.}
        \label{fig:DoF-1}
        \end{center}
        \vspace{-10pt}
        \end{figure}

At each time instant $t$, we assume TX-$k$ knows perfectly $\hv_{j}$ with delay $\tau_{jk}$, referred to as ``delayed CSIT''. Based on the delayed feedback, TX-$k$ predicts/estimates imperfectly the current channel~$\hv_{j}(t)$, which can be modeled as
  \begin{align} \label{eq:estimate}
    \hv_{j}(t) = \hat{\hv}_{j}^{[k]}(t) + \tilde{\hv}_{j}^{[k]}(t)
  \end{align}
where the estimate $\hat{\hv}_{j}^{[k]}(t)$ and estimation error $\tilde{\hv}_{j}^{[k]}(t)$ are independent and assumed to be zero-mean and with variance $(1-\sigma_{jk}^2)$, $\sigma_{jk}^{2}$, respectively $(0 \le \sigma_{jk}^{2} \le 1)$.

Using \eqref{eq:G-M}, we can then write the estimation error as
\begin{align}
  \sigma_{jk}^2 (\tau_{jk}) = 1-J_0^{2} (2 \pi f_d \tau_{jk}),
\end{align}
where $J_0(\cdot)$ is a monotonic decreasing function of $\tau_{jk}$ in the region of interest such that larger delay $\tau_{jk}$ results in less correlation, and hence a larger estimation error $\sigma_{jk}^2(\tau_{jk})$. We are interested in the design requirements for the feedback and backhaul delays as function of the power $P$. If $\sigma_{jk}^2(\tau_{jk})$ decreases as $1/P$ or faster as $P$ grows, the channel estimate is essentially perfect in terms of DoF and does not lead to any DoF loss even with conventional zero-forcing beamforming (ZFBF)~\cite{Caire:2010}. On the contrary, if $\sigma_{jk}^2(\tau_{jk})$ decreases as or slower than $P^0$, e.g., a constant estimation error, then the DoF collapses to zero \cite{Zhang:11}. Hence, to investigate the impact of the delay on the DoF, we assume that the estimation error $\sigma_{jk}^{2}(\tau_{jk})$ can be parameterized as an exponential function of the power $P$, e.g., $\sigma_{jk}^{2}(\tau_{jk}) \sim P^{-\alpha}$ for some $\alpha \in [0,1]$. More specifically, we introduce a parameter $\alpha_{jk}(\tau_{jk}) \in [0,1]$, such that
\begin{align}
  \alpha_{jk}(\tau_{jk}) \defeq  -\lim_{P \to \infty} \frac {\log \sigma_{jk}^{2}(\tau_{jk})}{\log P}, \label{eq:alpha-def}
\end{align}
to indicate the quality of current CSIT estimate known at high SNR.

In this paper, we consider the symmetric case as in \eqref{eq:delays} for simplicity, where the current CSIT $\hv_j(t)$ is known by TX-$k$, i.e., $\hat{\hv}_{j}^{[k]}(t)$, with quality of
\begin{align}
  \alpha_{jk}(\tau_{jk})=\left\{\Pmatrix{\alpha_1,\\ \alpha_2,}\quad \Pmatrix{\tau_{jk}=\tau_{\fb}~~~~~~~~~~~ \\  \tau_{jk} = \tau_{\fb}+\tau_{\bh}} \right.
\end{align}
where $0 \le \alpha_2 < \alpha_1 \le 1$. We henceforth use $\alpha_1$ and $\alpha_2$ to represent respectively the estimation quality of current CSIT with better (i.e.~less delay) and worse (i.e.~more delay) qualities.

\section{Exploiting Either Imperfect Current or Pure Delayed CSIT}
Following the prediction step described in the previous section, we consider now that each TX has both access to the perfect delayed CSIT and an imperfect estimate of the current CSIT. In the following, we first consider the two known solutions corresponding to the extreme cases where one exploits solely the imperfect current CSIT and the other one utilizes merely the perfect delayed one.

\subsection{ZFBF with Imperfect Current CSIT}
One extreme approach is to perform ZF beamforming by utilizing solely the imperfect current CSIT. As mentioned in the previous section, due to the backhaul delay, different TXs have access to the imperfect current CSIT with different estimation qualities. More specifically, TX-$k$ applies ZFBF only based on its own channel estimates~$\hat{\hv}_{1}^{[k]}(t)$ and~$\hat{\hv}_{2}^{[k]}(t)$, and hence the effective precoder takes the form of
\begin{align} \label{eq:conv-zf}
  \qv_i(t) = \Bmatrix{\left\{{\hat{\hv}_{\bar{i}}^{[1]}(t)}^{\bot}\right\}_1 \\ \left\{{\hat{\hv}_{\bar{i}}^{[2]}(t)}^{\bot}\right\}_2}
\end{align}
where $\bar{i} = i \mod 2 + 1$, and $q_{i1}$ is from the first element of the orthogonal component of ${\hat{\hv}_{\bar{i}}^{[1]}(t)}$, whereas $q_{i2}$ is from the second element of the orthogonal component of ${\hat{\hv}_{\bar{i}}^{[2]}(t)}$. With such beamformer, we have the following proposition.
\begin{proposition}
  For the two-cell Network MIMO with feedback and hackhaul delays, the conventional ZFBF based on the imperfect current CSIT achieves a sum DoF of
  \begin{align} \label{eq:dof-zf}
    \DoF_{\mathrm{ZF}} = 2 \min\{\alpha_1,\alpha_2\} = 2 \alpha_2.
  \end{align}
\end{proposition}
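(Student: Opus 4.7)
The plan is to compute the per-user SINR achieved by the precoder \eqref{eq:conv-zf} and extract the DoF from its high-SNR scaling. After allocating power $P/2$ to each stream, the signal received at RX-$j$ reads
\begin{equation*}
y_j = \hv_j^\H \qv_j\, s_j + \hv_j^\H \qv_{\bar j}\, s_{\bar j} + z_j,
\end{equation*}
so what matters is (i) the desired-signal gain $\lvert \hv_j^\H \qv_j\rvert^2$, which is a nondegenerate $\Theta(1)$ random variable (hence useful signal power $\Theta(P)$), and (ii) the residual-interference second moment $\E\,\lvert \hv_j^\H \qv_{\bar j}\rvert^2$.

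I claim this second moment scales as $P^{-\alpha_2}$. Writing the inner product component by component,
\begin{equation*}
\hv_j^\H \qv_{\bar j} \;=\; \sum_{\ell=1}^{2} h_{j\ell}^{*}\,\bigl\{\hat{\hv}_j^{[\ell]\,\bot}\bigr\}_\ell,
\end{equation*}
I substitute the per-TX decomposition $h_{j\ell}^{*} = \hat{h}_{j\ell}^{[\ell]*} + \tilde{h}_{j\ell}^{[\ell]*}$ from \eqref{eq:estimate} into the $\ell$-th summand, and use the orthogonality $(\hat{\hv}_j^{[\ell]})^\H \hat{\hv}_j^{[\ell]\,\bot}=0$ to trade the ``perfect-estimate'' contribution $\hat{h}_{j\ell}^{[\ell]*}\bigl\{\hat{\hv}_j^{[\ell]\,\bot}\bigr\}_\ell$ for $-\hat{h}_{j,3-\ell}^{[\ell]*}\bigl\{\hat{\hv}_j^{[\ell]\,\bot}\bigr\}_{3-\ell}$. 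Combining $\ell=1$ and $\ell=2$, both the remaining cross-estimate mismatch and the explicit error terms reduce to linear combinations of the error components $\tilde{h}_{j\ell}^{[k]}$ ($k,\ell\in\{1,2\}$) with bounded (random) coefficients, whence
\begin{equation*}
\E\,\lvert \hv_j^\H \qv_{\bar j}\rvert^2 \;\lesssim\; \sigma_{j1}^2 + \sigma_{j2}^2 \;\sim\; P^{-\min(\alpha_{j1},\alpha_{j2})} \;=\; P^{-\alpha_2},
\end{equation*}
by \eqref{eq:alpha-def} and the symmetric-delay model \eqref{eq:delays}; for both $j=1,2$ the relevant minimum equals $\alpha_2$.

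Putting these together, $\mathrm{SINR}_j \sim P/(1 + P\cdot P^{-\alpha_2}) \sim P^{\alpha_2}$, so $R_j \sim \alpha_2\log P$, and summing the two-user rates yields $\DoF_{\mathrm{ZF}} = 2\alpha_2 = 2\min\{\alpha_1,\alpha_2\}$. The delicate step is the distributed-CSIT cancellation: because the two TXs are driven by \emph{different} estimates $\hat{\hv}_j^{[1]}\neq\hat{\hv}_j^{[2]}$, zero-forcing is never satisfied exactly even if each individual TX sees near-perfect CSI, and one must verify that the resulting mismatch contributes only a $P^{-\alpha_2}$ variance---rather than an $O(1)$ interference floor---to the leaked interference.
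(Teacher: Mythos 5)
Your proof is correct and takes essentially the same route as the paper: the paper's proof simply asserts $\E \Abs{\hv_i^\H(t) \qv_{\bar{i}}(t)} \sim P^{-\alpha_2}$ by citing \cite{Kerret:2012}, deduces residual interference of order $P^{1-\alpha_2}$, and concludes $\alpha_2$ DoF per user by symmetry. You reach the same scaling and conclusion, merely supplying explicitly the distributed-CSIT mismatch calculation (orthogonality against each TX's own estimate plus the inter-estimate discrepancy, both of order $\max\{\sigma_{j1}^2,\sigma_{j2}^2\} \sim P^{-\alpha_2}$) that the paper delegates to the cited reference.
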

\begin{proof}
  The result is directly obtained from \cite{Kerret:2012}, where $\E \Abs{\hv_i^\H(t) \qv_{\bar{i}}(t)} \sim  P^{-\alpha_2}$, and the residual interference is of power, e.g.,$\E \Abs{\hv_1^\H(t) \qv_{2}(t) v(t)} \sim  P^{1-\alpha_2}$ at RX-1, where $v(t)$ is the data symbol intended to RX-2 and satisfies $\E \Abs{v(t)} \le P$. Hence, the DoF for each RX is $\alpha_2$ for symmetry.
\end{proof}
The estimation error in either link reduces the DoF at both RXs, which is clearly a very detrimental property. Note furthermore that conventional robust precoding such as regularized ZFBF does not lead to any DoF improvement.

\subsection{MAT Alignment with Delayed CSIT}
In the other extreme, it has been recently shown in \cite{Maddah-Ali:12} that it is possible to achieve a larger DoF by exploiting solely the delayed CSIT in single-cell MISO BC by a three-slotted protocol (referred to as ``MAT alignment''). Since the delayed CSIT is equally available at both TXs, the MAT alignment can be applied in our setting with backhaul delay without any modification. For brevity, we describe solely in the following the main steps of a variant of the MAT alignment, which is detailed in \cite{Maddah-Ali:12,Yang:2012}.
\subsubsection{Slot-1} $4$ symbols are sent jointly from the two TXs \emph{without} precoding
\begin{align}
\xv(1)=\uv(1) + \vv(1)
\end{align}
where $\uv(1),\vv(1)\in \CC^{2 \times 1}$ are each made of two user's data symbols intended to RX-$1$ and RX-$2$, respectively, and satisfy $\E \Norm{\uv(1)} = \E \Norm{\vv(1)} \le P$. The interferences $\eta_1 \defeq \hv_1^\H(1) \vv(1)$ and $\eta_2 \defeq\hv_2^\H(1) \uv(1)$ are overheard at RX-$1$ and RX-$2$, respectively.
\subsubsection{Slot-2 \& 3}
In these two slots, the overheard interferences are directly retransmitted by time division, i.e.,
\begin{align}
\xv(2)=\Bmatrix{\eta_1 \\ 0 }, \quad \xv(3)=\Bmatrix{\eta_2 \\ 0 }.
\end{align}
The signal vector received over the three time slots at RX-1 is given by\footnote{The noise term is omitted hereafter for conciseness, since it does not matter in the sense of DoF.}:
\begin{align} \label{eq:channel-model}
\yv_1 = \underbrace{\Bmatrix{\hv_1^\H(1) \\ \mathbf 0 \\ h_{11}^*(3) \hv_2^\H(1) }}_{\text{rank=2}} \uv(1) + \underbrace{\Bmatrix{\hv_1^\H(1)  \\ h_{11}^*(2) \hv_1^\H(1) \\ \mathbf 0}}_{\text{rank=1}} \vv(1).
\end{align}
Note that the interference carrying $\vv(1)$ is aligned in one-dimension, leaving 2-dimensional interference-free subspace for the desired signal $\uv(1)$. Consequently, $\uv(1)$ can be successfully recovered at RX-1. The same rule applies to RX-2 as well. Hence, the total $4$ symbols are delivered over $3$ time slots, yielding a sum DoF of
\begin{align} \label{eq:dof-mat}
\DoF_{\mathrm{MAT}} = \frac{4}{3}.
\end{align}

Albeit fascinating in nature, MAT alignment cannot exploit the channel temporal correlation, no matter how perfect the current CSIT can be predicted from the past ones. This leads intuitively to the question: \emph{Can the imperfect current CSIT be exploited together with the delayed one to gain larger DoF?}

\section{Optimal Use of Both Delayed and Imperfect Current CSIT}
\subsection{The $\alpha$-MAT Alignment}
More recently, a modified MAT alignment developed in \cite{Yang:2012,Gou:2012} has been shown to optimally exploit both the delayed and imperfect current CSIT in terms of DoF in the single-cell two-user MISO BC. The differences with the original MAT alignment are two-fold. First, the imperfect current CSIT is exploited in Slot-1 to minimize the power of overheard interference. Particularly, by balancing the transmitting power of two symbols, with respective power $P$ and $P^{1-\alpha}$, and aligning the symbol with higher power to the null space of the estimated current channel vector, the power of the overheard interference can be reduced from $P$ to $P^{1-\alpha}$. Second, a \emph{compressed/quantized} version of the overheard interferences are retransmitted in Slot-$2$ and Slot-$3$ by time division, allowing for new symbols being superposed on them to get extra DoF. We refer to this scheme as ``{$\alpha$-MAT alignment}'' hereafter.

\subsection{The $\alpha$-MAT Alignment in Network MIMO}
When it comes to the two-cell network, however, such a scheme cannot be directly applied. With heterogeneous CSI delays, the TX in neighboring cell has access to a worse estimate of current CSIT than the TX in home cell. It results in two obstacles to the implementation of the $\alpha$-MAT alignment: (1) how to zero-force the interference, and (2) how to retransmit the residual interference. For the first point, since the TXs do not share the same channel estimate, it is not clear how well the interference can be zero-forced. Regarding the second aspect, the overheard interferences $\eta_k,~k=1,2$ can be reconstructed at any TX, only when the beamformer $\qv_i(t)$ that contains the current channel estimate:
\begin{align} \label{eq:ZF-constraint}
    q_{ik}(t) &= f_i(\{\hat{\hv}_j^{[k]}(\tau),~j=1,2\}_{\tau=1}^{t})
\end{align}
is also available at TX-$k$, where $q_{ik}(t)$ is the $k$-th element of $\qv_i(t)$. It is unfortunately not the case with heterogeneous CSI delays.

{To overcome these two obstacles, we develop here a new version of the $\alpha$-MAT alignment being more robust to the CSI discrepancy in this scenario.}
\subsubsection{Slot-1} With imperfect current CSIT, 4 symbols are sent from two TXs \emph{with} precoding, such that
\begin{align}\label{eq:tx-slot1}
  \xv(1) = [\pv_1(1) \ \qv_1(1)] \uv(1) + [\pv_2(1) \ \qv_2(1)] \vv(1)
\end{align}
where $\uv(1),\vv(1) \in \CC^{2 \times 1}$. The received signals are

{\fontsize{10pt}{12.5pt}\selectfont
\begin{align}
  y_1 (1) &= \hv_1^\H(1) [\pv_1(1) \ \qv_1(1)] \uv(1) + \underbrace{\hv_1^\H(1) [\pv_2(1) \ \qv_2(1)] \vv(1)}_{\eta_1} \nn \\
  y_2 (1) &= \underbrace{\hv_2^\H(1) [\pv_1(1) \ \qv_1(1)] \uv(1)}_{\eta_2} + \hv_2^\H(1) [\pv_2(1) \ \qv_2(1)] \vv(1) \nn
\end{align}
}

where $\pv_i(t),\qv_i(t) \in \CC^{2 \times 1},~i=1,2$ are beamformers. While $\pv_i(t)$ can be taken randomly, $\qv_i(t)$ is designed to be a ZF-type beamformer for minimizing the power of overheard interferences $\eta_k,~k=1,2$. The design of the ZF-type beamformer is a critical design parameter. With TXs having different CSI, this choice becomes non-trivial and is discussed in the following.


\subsubsection*{\underline{Conventional ZFBF}}
One of the possible choice consists in using the conventional ZFBF described in (8). Using the previous results, we know that $\E \Abs{\hv_i^\H(t) \qv_{\bar{i}}(t)} \sim  P^{-\alpha_2}$. The interference terms can then be written as $\eta_1=\eta_{11}+\eta_{12}$ and $\eta_2=\eta_{21}+\eta_{22}$ where
  \begin{align}
  \eta_{11} &= h_{11}^*(1) \Bmatrix{ p_{21}(1) & q_{21}(1)} \vv(1)\label{eq:interference-terms-zf-1}\\
  \eta_{12} &= h_{12}^*(1) \Bmatrix{ p_{22}(1) & q_{22}(1)} \vv(1) \\
  \eta_{21} &= h_{21}^*(1) \Bmatrix{ p_{11}(1) & q_{11}(1)} \uv(1)\\
  \eta_{22} &= h_{22}^*(1) \Bmatrix{ p_{12}(1) & q_{12}(1)} \uv(1) \label{eq:interference-terms-zf-4}.
  \end{align}
Let, for instance, $\eta_1$ and $\eta_2$ be generated by TX-2 and TX-1 respectively, the reconstruction of $\eta_{11}$ at TX-2 and $\eta_{22}$ at TX-1 is a problem.

To solve this problem, we resort to a modification of conventional ZFBF. Note that the TX-$k$ possessing $\hat{\hv}_k^{[k]}(t)$ (c.f.~the estimate with less delay $\tau_{\fb}$), can somehow, e.g., waiting for a moment of $\tau_{\bh}$, estimate a worse version of ${\hv}_k(t)$, i.e., $\hat{\hv}_{k}^{[j]}(t)$,~the estimate also held by TX-$j$, with estimation error of $P^{-\alpha_2}$. Thus, $\hat{\hv}_{k}^{[j]}(t)$ is available at both TXs. Hence, the modified ZFBF can be designed as
\begin{align}
  \bar{\qv}_1(t) = \Bmatrix{\left\{{\hat{\hv}_{2}^{[1]}(t)}^{\bot}\right\}_1 \\ \left\{{\hat{\hv}_{2}^{[1]}(t)}^{\bot}\right\}_2} \quad \bar{\qv}_2(t) = \Bmatrix{\left\{{\hat{\hv}_{1}^{[2]}(t)}^{\bot}\right\}_1 \\ \left\{{\hat{\hv}_{1}^{[2]}(t)}^{\bot}\right\}_2}
\end{align}
making the interference power of $\E \Abs{\eta_k} \sim  P^{1-\alpha_2}${, determined by the worse quality of current CSIT}.

Further, the interference terms in \eqref{eq:interference-terms-zf-1} and \eqref{eq:interference-terms-zf-4} become
\begin{equation}
\begin{aligned}
  \bar{\eta}_{11} &= h_{11}^*(1) \Bmatrix{ p_{21}(1) & \bar{q}_{21}(1)} \vv(1)\\
  \bar{\eta}_{22} &= h_{22}^*(1) \Bmatrix{ p_{12}(1) & \bar{q}_{12}(1)} \uv(1)
\end{aligned}
\end{equation}
which are reconstructible at TX-2 and TX-1, respectively, such that $\eta_k$ ($k=1,2$) can be retransmitted in the ensuing two slots.


\subsubsection*{\underline{Active/Passive-ZFBF}}
To improve over the conventional ZF beamforming, a so-called \emph{Active/Passive ZFBF} (referred to as ``A/P-ZFBF'') developed in \cite{Kerret:2012} for the case of distributed CSIT can be applied here. Differently from the conventional ZFBF, the A/P-ZFBF takes the form of
\begin{align}
  \qv_1(t) = \Bmatrix{1 \\ -\frac{\hat{\hv}_{21}^{[2]*}(t)}{\hat{\hv}_{22}^{[2]*}(t)} } \quad \qv_2(t) = \Bmatrix{ -\frac{\hat{\hv}_{12}^{[1]*}(t)}{\hat{\hv}_{11}^{[1]*}(t)} \\ 1 }
\end{align}
where one element in $\qv_i(t)$ is set independently of the channel realization and the other one is chosen so as to satisfy the orthogonality constraint. Thus, we have
\begin{align}
  &\E \Abs{\hv_1^\H(t) \qv_{2}(t)} \\
  & =\E \left|{\Bmatrix{\hat{h}_{11}^{[1]*}(t)+\tilde{h}_{11}^{[1]*}(t) & \hat{h}_{12}^{[1]*}(t)+\tilde{h}_{12}^{[1]*}(t)} \Bmatrix{ -\frac{\hat{\hv}_{12}^{[1]*}(t)}{\hat{\hv}_{11}^{[1]*}(t)} \\ 1 }} \right|^2 \nn \\
  &=\E \left|{\tilde{h}_{11}^{[1]*}(t) \left( -\frac{\hat{\hv}_{12}^{[1]*}(t)}{\hat{\hv}_{11}^{[1]*}(t)} \right) + \tilde{h}_{12}^{[1]*}(t)}\right|^2\\
  & \sim P^{-\alpha_1}.
\end{align}
Symmetrically, it also holds $\E \Abs{\hv_2^\H(t) \qv_{1}(t)} \sim P^{-\alpha_1}$. Hence, the power of the interference has been reduced to $\E \Abs{\eta_k} \sim  P^{1-\alpha_1}$.

Recall that $\qv_{1}(t)$ only depends on $\hat{\hv}_2^{[{2}]}(t)$ and $\qv_{2}(t)$ only on $\hat{\hv}_1^{[{1}]}(t)$. Thus, the interference overheard by both RXs, which can be written as
\begin{equation}
  \begin{aligned}
    \eta_1 &= \hv_1^\H(1) \qv_{2}(1) \vv(1) \\
    \eta_2 &= \hv_2^\H(1) \qv_{1}(1) \uv(1)
  \end{aligned}
\end{equation}
are reconstructible at TX-1 and TX-2, respectively.

\subsubsection{Slot-2 \& 3}
Instead of forwarding ${\eta}_k~(k=1,2)$ directly, we quantize them to $\hat{\eta}_k~(k=1,2)$ with source codebook size $(1-\alpha)\log P$ bits each, which makes the quantization error negligible provided that the power of ${\eta}_k~(k=1,2)$ is smaller or equal to $P^{1-\alpha}$ \cite{Yang:2012}. Subsequently, we can retransmit their indices, denoted by $c_k~(k=1,2)$, in a digital fashion, expecting less channel resource to be consumed.

The Slot-2 and 3 consist of the broadcasting of the digitalized interferences $\hat{\eta}_1$ and $\hat{\eta}_2$ where in addition to this broadcasting new private symbols are sent with a power such that they do not lead to any additional interference. {Here, we face with the same problem of how to design the ZF-type beamformer as in Slot-1.} For instance with A/P-ZFBF, together with the codewords of the digitalized interferences $c_k~(k=1,2)$ (c.f.~common message) with rate $(1-\alpha_1)\log P$ but power $P$, another precoded fresh symbols (c.f.~private message) are sent with rate $\alpha_1 \log P$ and power scaling as $P^{\alpha_1}$. Then, the common and private messages are encoded by using superposition coding techniques \cite{Cover:2006}, i.e.,
      \begin{equation} \label{slot-2-3}
      ~~~\text{ZFBF:}
      \begin{aligned} ~~\left\{\Pmatrix{
        \xv(2) = \Bmatrix{0 \\ c_1} + \bar{\qv}_{2}(2) u(2) + \bar{\qv}_{1}(2)v(2)  \\
        \xv(3) = \Bmatrix{c_2\\0} + \bar{\qv}_{2}(3) u(3) + \bar{\qv}_{1}(3)v(3)}\right.
      \end{aligned}
      \end{equation}
      \begin{equation} \label{slot-2-3-2}
      \text{A/P-ZFBF:}
      \begin{aligned} \left\{\Pmatrix{
        \xv(2) = \Bmatrix{c_1\\0} + \qv_{2}(2) u(2) + \qv_{1}(2)v(2)  \\
        \xv(3) = \Bmatrix{0 \\ c_2} + \qv_{2}(3) u(3) + \qv_{1}(3)v(3)}\right.
      \end{aligned}
    \end{equation}
where $u(t),v(t)$ ($t=2,3$) are private messages intended to RX-1 and RX-2 respectively, with rate $\alpha_1 \log P$ each and power constraint $\E \Abs{u(t)}=\E \Abs{v(t)} \le P^{\alpha_1}$.

In Slot-2, the received signals are given by (those in Slot-3 can be similarly obtained)
\begin{equation} \nn
  \begin{aligned}
    y_1(2) &= \underbrace{h_{11}^*(2)c_1}_{P} + \underbrace{\hv_1^\H(2) \qv_{1}(2) u(2)}_{P^{\alpha_1}} + \underbrace{\hv_1^\H(2) \qv_{2}(2)v(2)}_{P^0}  \\
    y_2(2) &= \underbrace{h_{21}^*(2)c_1}_{P} + \underbrace{\hv_2^\H(2) \qv_{1}(2) u(2)}_{P^0} + \underbrace{\hv_2^\H(2) \qv_{2}(2)v(2)}_{P^{\alpha_1}}
  \end{aligned}
\end{equation}
Applying successive decoding\footnote{The common message is firstly decoded by treating other lower power interferences as noise, then with its entire term reconstructed and subtracted from the received signal, and finally the private message is subsequently decoded from the remaining signal~\cite{Cover:2006}.}, the common message (c.f.~$c_1$) is first decoded, followed by the private message (c.f.~$u(2)$ for RX-1 and $v(2)$ for RX-2). Thus, a extra DoF of $\alpha_1$ from each symbol $u(2),v(2)$ is yielded. Then, the overheard interference $\eta_k$ can be reconstructed from $c_k$ with the distortion error drown in the noise \cite{Yang:2012}. Note that $\eta_k$ provides not only the interference cancelation for one RX, but also another linearly independent equation for the other RX, making $\uv(1)$ and $\vv(1)$ retrievable, and hence, yielding $2-\alpha_1$ DoF for each RX. As a result, all the data symbols can be decoded such that each RX obtains $2-\alpha_1+2\alpha_1 = 2+\alpha_1$ DoF over three slots. To sum up, we have the following results:
\begin{proposition}
For the two-cell Network MIMO with feedback and hackhaul delays, the conventional ZFBF and A/P-ZFBF with both delayed and imperfect current CSIT achieve sum DoF of, respectively,
\begin{align}
  \DoF_{\alpha\mathrm{-MAT}}^{\mathrm{ZF}} &= \frac{4+2 \min\{\alpha_1, \alpha_2\}}{3} = \frac{4+2 \alpha_2}{3} \label{eq:dof-amat-zf}\\
  \DoF_{\alpha\mathrm{-MAT}}^{\mathrm{A/P-ZF}} &= \frac{4+2 \max\{\alpha_1, \alpha_2\}}{3} = \frac{4+2 \alpha_1}{3}. \label{eq:dof-amat-apzf}
\end{align}
\end{proposition}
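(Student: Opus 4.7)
The plan is to derive both DoF expressions in parallel from the three-slot scheme of \eqref{eq:tx-slot1}--\eqref{slot-2-3-2}, which shares the same structure for the two beamformer choices and differs only through the residual interference power after Slot-1. I will carry out one generic DoF count parameterized by $\alpha\in\{\alpha_1,\alpha_2\}$ and then set $\alpha=\alpha_2$ for the modified conventional ZFBF and $\alpha=\alpha_1$ for A/P-ZFBF.

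First I would pin down $\alpha$ by combining the Slot-1 interference analysis with the reconstructibility constraint needed for later forwarding of $\eta_k$. For the modified conventional ZFBF, both TXs must use the coarser shared estimate $\hat{\hv}_k^{[j]}$ so that $\bar{\qv}_i$, and hence $\bar\eta_k$, is reconstructible at the TX that will forward it in Slots 2--3; this yields $\E\Abs{\hv_i^\H(t)\bar{\qv}_{\bar{i}}(t)}\sim P^{-\alpha_2}$ and therefore $\E\Abs{\eta_k}\sim P^{1-\alpha_2}$, i.e.\ $\alpha=\alpha_2$. For A/P-ZFBF, reconstructibility is automatic because $\qv_1$ depends only on CSI at TX-2 and $\qv_2$ only on CSI at TX-1, and the calculation preceding the proposition gives $\E\Abs{\eta_k}\sim P^{1-\alpha_1}$, i.e.\ $\alpha=\alpha_1$.

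Next I would source-code each $\eta_k$ into an index $c_k$ of $(1-\alpha)\log P$ bits, so the quantization distortion has $O(1)$ variance and is drowned in the additive noise \cite{Yang:2012}. In Slots 2 and 3 the codeword $c_k$ is broadcast at full power $P$ while fresh private symbols are superposed, each carried by its own ZF-type beamformer at power $P^\alpha$ and rate $\alpha\log P$. A standard successive-decoding argument then shows that each RX first decodes $c_k$---the private layer sits inside the $O(P^{1-\alpha})$ interference budget that the quantization was built to tolerate---and then, after subtracting $c_k$, decodes its intended private symbol via the orthogonal ZF beamformer, picking up $\alpha$ DoF per slot and hence $2\alpha$ DoF per RX over the two slots.

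Finally, to close the Slot-1 accounting, I observe that once $c_1,c_2$ are decoded at both RXs, the reconstructed $\hat\eta_1,\hat\eta_2$ provide RX-$i$ with two linearly independent equations in its intended Slot-1 $2$-vector: one from $y_i(1)-\hat\eta_i$ at effective SNR scaling as $P$ (rate $\log P$), and the other from $\hat\eta_{\bar i}$ at effective SNR scaling as $P^{1-\alpha}$ (rate $(1-\alpha)\log P$). Inverting the $2\times 2$ system recovers the Slot-1 symbols at $1+(1-\alpha)=2-\alpha$ DoF per RX; adding the $2\alpha$ private DoF and dividing by the three slots yields per-RX DoF $(2+\alpha)/3$ and sum DoF $(4+2\alpha)/3$, with $\alpha=\alpha_2$ giving \eqref{eq:dof-amat-zf} and $\alpha=\alpha_1$ giving \eqref{eq:dof-amat-apzf}. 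The step I expect to be the main obstacle is the reconstructibility argument for the modified ZFBF in Slot-1---showing rigorously that the effective ZF quality is necessarily pinned to the worse value $\alpha_2$ whenever both TXs must regenerate $\eta_k$---since this is the genuinely network-MIMO issue absent from the original single-cell $\alpha$-MAT analysis, and it is what ultimately determines which of $\alpha_1,\alpha_2$ enters the final formula.
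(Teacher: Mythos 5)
Your proposal is correct and follows essentially the same route as the paper: the same three-slot $\alpha$-MAT construction with the reconstructibility issue resolved by having both TXs use the coarser shared estimate (pinning the conventional-ZFBF quality to $\alpha_2$) versus the automatically reconstructible A/P-ZFBF (retaining $\alpha_1$), followed by quantize-and-forward of $\eta_k$ with superposed private symbols and successive decoding, yielding $(2-\alpha)+2\alpha$ DoF per RX over three slots. The generic parametrization by $\alpha\in\{\alpha_1,\alpha_2\}$ is just a compact rewriting of the paper's two parallel cases, so no gap remains.
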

It is worth noting that the sum DoF with conventional ZFBF limited by the worst quality of current CSIT (c.f.~longest delay), while that with A/F-ZFBF is solely determined by the least delayed version, meaning that the impact of the backhaul delay could be mitigated\footnote{The A/P-ZFBF is better in the sense of DoF, however, it meets some practical issues, e.g., power unbalance, which are addressed in \cite{Kerret:2012}.}.

In fact, A/P-ZFBF also verifies the following result.
\begin{theorem} [Optimal DoF Region]
The A/P-ZFBF achieves the optimal DoF region for the two-cell Network MIMO with feedback and hackhaul delays, which can be characterized by
\begin{equation}
    \begin{aligned}
      d_1 &\le 1\\
      d_2 &\le 1\\
      2d_1 + d_2 &\le 2+ \max \{\alpha_{1},\alpha_{2}\}\\
      d_1 + 2d_2 &\le 2+ \max \{\alpha_{1},\alpha_{2}\}.
    \end{aligned}
  \end{equation}
\end{theorem}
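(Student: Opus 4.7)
The plan is to prove the theorem in two pieces: an inner (achievability) bound matching the stated region, and a matching outer bound. For the inner bound, observe that the polytope is the convex hull of $(0,0)$, $(1,0)$, $(0,1)$, $(1,\alpha_1)$, $(\alpha_1,1)$, and the symmetric point $(\tfrac{2+\alpha_1}{3},\tfrac{2+\alpha_1}{3})$. The single-user corners are trivial, the symmetric corner is given by Proposition~2 with A/P-ZFBF since $\max\{\alpha_1,\alpha_2\}=\alpha_1$, and the asymmetric corners $(1,\alpha_1)$ and $(\alpha_1,1)$ are obtained by asymmetrizing the same three-slot A/P-ZFBF $\alpha$-MAT scheme---biasing the Slot-1 power and rate split toward one receiver and adapting the common/private superposition in Slots~2 and~3 accordingly, in the spirit of the asymmetric corner construction used for the single-cell BC in \cite{Yang:2012}. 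Time-sharing among these extreme points then sweeps out the whole region.

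For the converse, $d_1\le 1$ and $d_2\le 1$ follow from the single-antenna RX constraint. For the two weighted sum bounds I would use a cooperation (genie) argument: give both TXs free access to every current channel estimate of the better quality $\alpha_1$. This enhancement can only enlarge the achievable region, and collapses the distributed two-cell system to a centralized two-antenna MISO broadcast channel with uniform current CSIT quality $\alpha_1=\max\{\alpha_1,\alpha_2\}$ and perfect delayed CSIT. For this canonical model, the DoF region has already been characterized in \cite{Yang:2012,Gou:2012} to be exactly $\{(d_1,d_2):\,d_i\le 1,\,2d_i+d_{\bar{i}}\le 2+\alpha_1\}$, which coincides with the region claimed in the theorem. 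Tightness is then immediate since the inner bound touches this outer bound.

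The main obstacle I anticipate is justifying that the cooperation enhancement is DoF-tight, i.e., that upgrading the backhaul-delayed CSIT from quality $\alpha_2$ up to $\alpha_1$ at the worse TX introduces no slack in the outer bound. The reason this works is precisely that A/P-ZFBF, unlike conventional ZFBF, produces beamformers depending only on the \emph{locally} held better-quality estimate, so the distributed CSI constraint is not binding at the DoF level and the achievable region already reaches the centralized outer bound. A secondary technical point on the achievability side is the explicit construction of the asymmetric-corner schemes, which requires careful power and quantization-rate bookkeeping so that the $(1-\alpha_1)\log P$-bit digitization of the overheard interferences in Slots~2 and~3 remains consistent after the asymmetric rate split.
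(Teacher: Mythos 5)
Your overall strategy is sound and the region you target is the right one, but your route differs from the paper's on both halves, and one step needs more care. On the converse, the paper does not reduce to a previously characterized channel as a black box: it forms the genie-aided virtual BC (TX cooperation plus giving RX-2 the signal $y_1$ and message $W_1$) and then re-derives the weighted-sum-rate bound, adapting the extremal inequality of \cite{Yang:2012} so that the conditioning includes \emph{all four} estimates $\hat{\hv}_1^{[1]},\hat{\hv}_1^{[2]},\hat{\hv}_2^{[1]},\hat{\hv}_2^{[2]}$; the upper/lower bounding then produces the $\max\{\alpha_1,\alpha_2\}\log P$ term via $\min\{\sigma_{kk}^2,\sigma_{jk}^2\}=\sigma_{kk}^2\sim P^{-\max\{\alpha_1,\alpha_2\}}$. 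Your enhancement argument (grant both TXs the quality-$\alpha_1$ estimates, let them cooperate, cite the centralized mixed-CSIT region of \cite{Yang:2012,Gou:2012}) is a legitimate alternative, and your observation that only validity, not tightness, of the enhancement matters is correct since the inner bound meets it. The missing link is that after enhancement the centralized TX still holds the worse, quality-$\alpha_2$ estimates, so the channel is not literally the canonical model of \cite{Yang:2012}; you must add the (easy, but necessary) remark that these estimates are functions of the perfect delayed CSIT already present (equivalently, conditionally independent of the current channel given the better estimates under the Gauss--Markov model), hence DoF-irrelevant --- this is exactly the content the paper's modified extremal inequality supplies explicitly. On achievability, the paper does not asymmetrize the three-slot scheme: the corner $(1,\max\{\alpha_1,\alpha_2\})$ is obtained by a simple \emph{one-slot} superposition transmission, $\xv=[u_c\ 0]^\T+\qv_2 u_p+\qv_1 v_p$, with the common message at power $P$ and rate $(1-\alpha_1)\log P$ and A/P-ZF-precoded privates at power $P^{\alpha_1}$ and rate $\alpha_1\log P$, decoded successively; your proposed rebalancing of the three-slot $\alpha$-MAT scheme is plausible but left unconstructed, and the one-slot scheme is both simpler and immediately verifiable. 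Your key structural point --- that A/P-ZFBF uses only locally available best-quality estimates, so the distributed-CSI constraint costs nothing at the DoF level --- is indeed the reason the inner bound reaches the centralized outer bound.
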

\begin{proof}
   See Appendix.
\end{proof}

\section{Numerical Results}
In the following, we provide simulations to verify the efficiency of the proposed schemes. We consider a temporally-correlated rayleigh fading channel and average the sum rate in bits/s/Hz over $1000$ random channel realizations. The qualities of imperfect current CSIT are set to be $\alpha_1=1$ and $\alpha_2=0.5$. In Fig.~\ref{fig:DoF}, the sum rate curves of the aforementioned schemes (i.e., conventional ZFBF, MAT-alignment, and the $\alpha$-MAT alignment with conventional ZFBF and A/P-ZFBF) are plotted with regard to SNR. We also provide the sum rate of conventional ZFBF with perfect CSIT for comparison. As shown in the figure, the $\alpha$-MAT alignment with conventional ZFBF achieves a better DoF (slope of sum rate curve at high SNR) than the original MAT alignment but is limited by the estimation quality with the longest delay (i.e., $\alpha_2$), while the A/P-ZFBF achieves the maximal DoF of 2 (i.e., holding the same slope as the perfect CSIT case), since it is solely determined by the best accuracy (i.e., $\alpha_1=1$) of the current CSIT estimate.
\begin{figure}
\centering
\includegraphics[width=1\columnwidth]{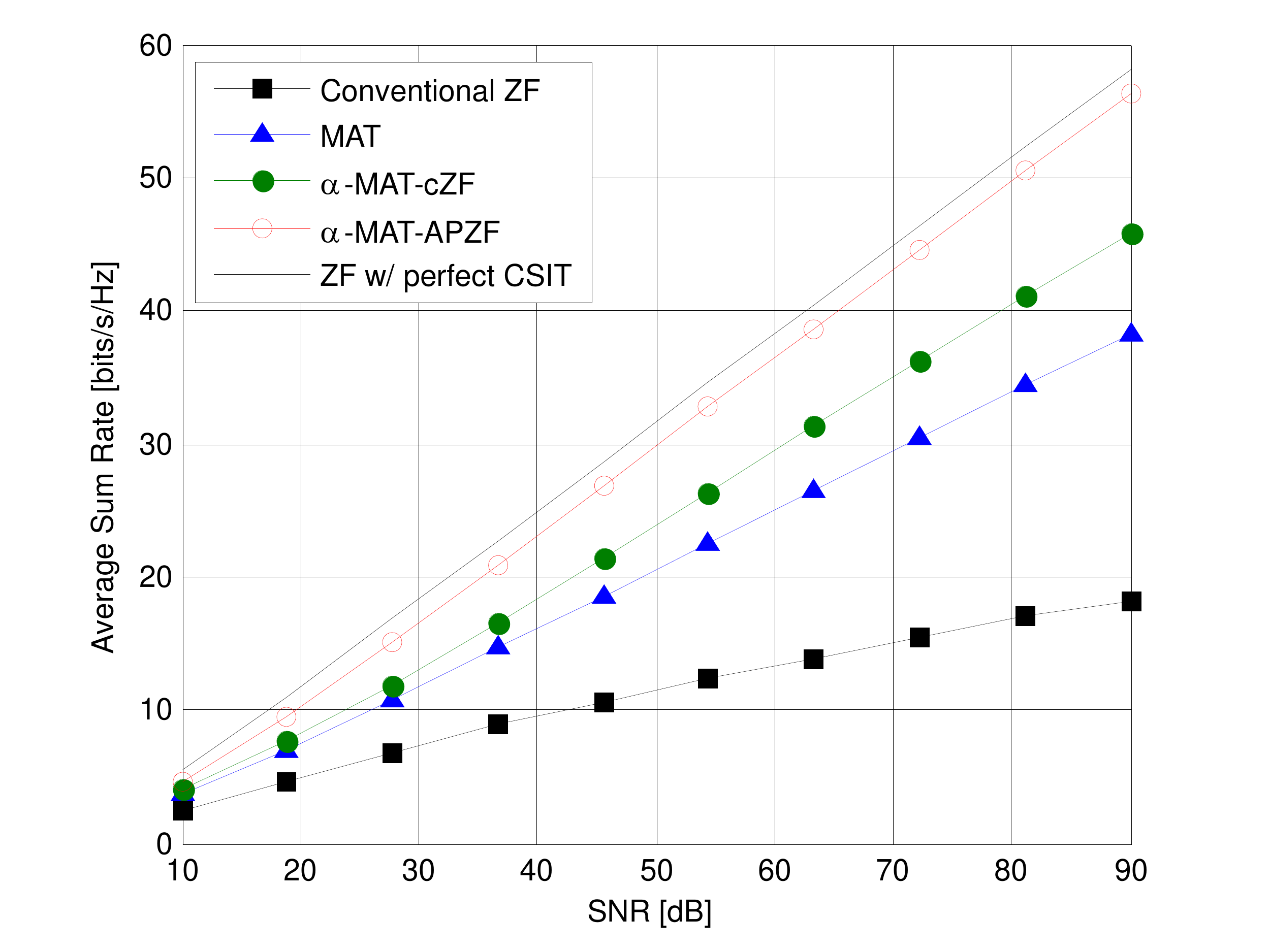}
\caption{Average Sum Rate vs. SNR.}
\label{fig:DoF}
\vspace{-10pt}
\end{figure}

\section{Conclusion}
We address the performance of Network MIMO accounting for the practical constraint that CSI sharing over the backhaul entails additional delays compared with the over-the-air feedback delay. We show that conventional strategies, including recent delayed CSIT-based schemes, fail to maximize DoF. An alternative optimal strategy is given for the specific two-cell case.

\section*{Appendix}
\subsection{Achievability}
The achievability is provided by the $\alpha$-MAT alignment with A/P-ZFBF, where the vertices $(1, \max\{\alpha_1,\alpha_2\})$, $(\max\{\alpha_1,\alpha_2\},1)$ and $\left( \frac{2+\max\{\alpha_1,\alpha_2\}}{3}, \frac{2+\max\{\alpha_1,\alpha_2\}}{3}\right)$ are achievable.

As the vertex $\left( \frac{2+\max\{\alpha_1,\alpha_2\}}{3}, \frac{2+\max\{\alpha_1,\alpha_2\}}{3}\right)$ is demonstrated to be achievable in the previous section, in the following, we present the achievable scheme for the vertex $(1, \max\{\alpha_1,\alpha_2\})$. This vertex can be achieved within one single slot. The transmission with superposition coding can be given by
  \begin{align}
    \xv = \Bmatrix{u_c \\ 0} + \qv_2 u_p + \qv_1 v_p
  \end{align}
  where $u_c$ is a common message and decodable by both RXs but only desirable by RX-1, and $u_p,~v_p$ are private messages which can only be seen and decoded by their corresponding RXs. These transmitted symbols are assumed to satisfy the power constraints $\E\Abs{u_c} \le P$ with rate $(1-\alpha_1)\log P$, and $\E\Abs{u_p} = \E\Abs{v_p} \le P^{\alpha_1}$ with each symbol of rate $\alpha_1 \log P$. At the receiver side, similarly to the procedure in Slot-2 \& 3, both the common (c.f.~$u_c$) and private (c.f.~$u_p,~v_p$) messages can be subsequently decoded by successive decoding, yielding total $1$ and $\max\{\alpha_1,\alpha_2\}$ DoF for Rx-1 and Rx-2, respectively.

  By swapping the roles of two RXs, its counterpart $(\max\{\alpha_1,\alpha_2\},1)$ can be similarly achieved. Hence, with all the vertices achievable, the entire DoF region can be achieved by time sharing.

\subsection{Converse}
Before proceeding further, we present the definition of DoF region.

We denote $\Hm = \{\hv_1, \hv_2\}$ and $\hat{\Hm}^{[k]}(t) \defeq \{\hat{\hv}_{1}^{[k]}(t),\hat{\hv}_{2}^{[k]}(t)\}$.
A $\left(2^{nR_1},2^{nR_2},n\right)$ code scheme is defined to be comprise of:
\begin{itemize}
  \item two message sets, from which two independent messages $W_1$ and $W_2$ intended respectively to the RX-1 and RX-2 are uniformly chosen;
  \item one encoding function at $k$-th BS:
    \begin{align}
    \mathclap{x_k(t) = f_k \left(W_1,W_2, \{\hv_j(\tau)\}_{\tau=1}^{t-\tau_{jk}},\{\hat{\Hm}^{[k]}(\tau)\}_{\tau=1}^{t},~j=1,2\right)} \nn
    \end{align}
    which means the TX can have access to the delayed CSIT of its served RX with less delay and imperfect current CSIT with higher quality;
  \item and one decoding function at $j$-th RX:
    \begin{align}
    \mathclap{\hat{W}_j = g_{j} \left(\{y_j(t)\}_{t=1}^{n},\{\Hm(t)\}_{t=1}^{n},\{\hat{\Hm}^{[k]}(t),~k=1,2\}_{t=1}^{n}\right)} \nn
    \end{align}
\end{itemize}

A rate pair $(R_1,R_2)$ is said to be achievable if there exists a code scheme $\left(2^{nR_1},2^{nR_2},n\right)$, such that the average decoding error probability $P_{e}^{(n)}$, defined as $P_{e}^{(n)} \defeq \P[(W_1, W_2) \neq (\hat{W}_1,\hat{W}_2)]$ vanishes as the code length $n \to \infty$. The capacity region $\Cc$ is defined as the set of all achievable rate pairs. Accordingly, we have the following definition:
\begin{definition} [the DoF region]
  The DoF region for two-cell network MIMO is defined as
  \begin{align}
    \Dc &= \left\{ (d_1,d_2)\in \mathbb{R}_{+}^2 | \forall (w_1,w_2) \in \mathbb{R}_{+}^2, w_1d_1+w_2d_2 \right. \nonumber\\
    &~~~~~~~~~~~~~~~\left.\le \limsup_{P \to \infty} \left( \sup_{(R_1,R_2) \in \Cc} \frac{w_1R_1+w_2R_2}{\log P}\right) \right\}. \nn
  \end{align}
  where $R_i$, $d_i$ are the achievable rate and DoF for the $i$-th RX, respectively.
\end{definition}

To obtain the outer bound, we follow the strategy reminisced in \cite{Yang:2012} to obtain the genie-aided outer bound, by assuming that (a) TX-1 and TX-2 are perfectly cooperated to form a virtual BC, and (c) the RX-2 has the instantaneous knowledge of the RX-1's received signal $y_1(t)$ and also $W_1$. According to the techniques developed in \cite{Yang:2012}, we have the following lemmas:

\begin{lemma} [weighted sum rate bounds \cite{Yang:2012}]
  The weighted sum rate of two RXs for the genie-aided model can be bounded as
\begin{align}
n (2R_1 + R_2)  &\le  2 n \log P + \sum_{t=1}^n (h({y}_1(t),{y}_2(t)|\Uc(t),\Hm(t)) \nn\\
&- 2 h( {y}_1(t)|\Uc(t),\Hm(t)) )+ n \cdot O(1) + n \epsilon_n  \nn
\end{align}
where
\begin{align}
&\Uc(t) \defeq \left\{ \{y_1(k)\}_{k=1}^{t-1},\{y_2(k)\}_{k=1}^{t-1}, \right. \nn \\
&~~~~~~~~~\left.\{\Hm(\tau)\}_{\tau=1}^{t-1} ,\{\hat{\Hm}^{[1]}(\tau)\}_{\tau=1}^{t}, \{\hat{\Hm}^{[2]}(\tau)\}_{\tau=1}^{t}, W_1 \right\}. \nn
\end{align}
\end{lemma}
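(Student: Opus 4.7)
The plan is a standard Fano-plus-chain-rule outer bound, specialized to the genie-aided enhancement described just before the lemma. First, apply Fano's inequality to both users, using the genie that supplies $y_1^n$ and $W_1$ to RX-$2$: this gives $nR_1 \le I(W_1; y_1^n | \mathcal{F}) + n\epsilon_n$ and $nR_2 \le I(W_2; y_1^n, y_2^n | W_1, \mathcal{F}) + n\epsilon_n$, where $\mathcal{F}$ collects the full-block channel and estimate sequences $\{\Hm(t)\}_{t=1}^n$, $\{\hat{\Hm}^{[1]}(t)\}_{t=1}^n$, $\{\hat{\Hm}^{[2]}(t)\}_{t=1}^n$. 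Forming $2R_1 + R_2$, expanding each mutual information as a difference of differential entropies, and using the independence of $W_1$ and $W_2$ yields
\begin{align*}
n(2R_1+R_2) & \le 2 h(y_1^n | \mathcal{F}) + \bigl[h(y_1^n, y_2^n | W_1, \mathcal{F}) - 2 h(y_1^n | W_1, \mathcal{F})\bigr] \\
& \quad - h(y_1^n, y_2^n | W_1, W_2, \mathcal{F}) + 2n\epsilon_n.
\end{align*}

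Two of these four terms are routine. The per-sample power budget $\E\Abs{y_1(t)} = O(P)$ together with the Gaussian maximum-entropy bound, applied term-by-term after a chain rule, gives $2 h(y_1^n|\mathcal{F}) \le 2n\log P + nO(1)$. Conditioning on $(W_1, W_2, \mathcal{F})$ pins down the transmitted vectors entirely (each $x_k(t)$ is a deterministic function of its arguments), so $h(y_1^n, y_2^n | W_1, W_2, \mathcal{F})$ reduces to the Gaussian noise entropy, which is $nO(1)$.

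The central step is to single-letterize the bracketed difference into $\sum_t \bigl[h(y_1(t), y_2(t)|\Uc(t),\Hm(t)) - 2 h(y_1(t)|\Uc(t),\Hm(t))\bigr]$. The subtlety is that neither the natural chain-rule conditioning $\{y_1^{t-1}, y_2^{t-1}, W_1, \mathcal{F}\}$ nor $\{\Uc(t),\Hm(t)\}$ is contained in the other: $\mathcal{F}$ carries \emph{future} channel realizations beyond time~$t$ that $\Uc(t),\Hm(t)$ omits, while $\Uc(t)$ includes the past outputs $y_1^{t-1}, y_2^{t-1}$. The remedy is a conditional-independence observation: given $\Uc(t),\Hm(t)$, the current outputs $(y_1(t),y_2(t))$ are independent of the future channel sequences $\{\Hm(\tau), \hat{\Hm}^{[k]}(\tau)\}_{\tau>t}$. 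This is because $x_k(t)$ is a deterministic function of $W_1,W_2$ and CSI available only up to time~$t$, so conditionally only $W_2$ and the current noises $(z_1(t),z_2(t))$ remain random, and these are independent of the Gauss-Markov innovations driving the channel beyond time~$t$. With this in hand, the chain rule plus removal of the now-redundant future conditioning gives the \emph{equality} $h(y_1^n,y_2^n|W_1,\mathcal{F}) = \sum_t h(y_1(t), y_2(t) | \Uc(t),\Hm(t))$. For the $-2 h(y_1^n|W_1,\mathcal{F})$ piece, applying the chain rule and then inserting $y_2^{t-1}$ into each conditioning can only decrease that term, so $h(y_1^n|W_1,\mathcal{F}) \ge \sum_t h(y_1(t)|\Uc(t),\Hm(t))$; multiplying by $-2$ flips the direction in our favor. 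Assembling these pieces produces the claimed single-letter bound.

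The main obstacle is the conditional-independence reduction just sketched: one must cleanly argue that future channel realizations genuinely drop out once $\Uc(t),\Hm(t)$ is conditioned on. This requires carefully tracking the Markov structure of the Gauss-Markov model and the heterogeneous-delay encoding rule, and checking that the transmit signals at time~$t$ depend only on the delayed CSI present in $\Uc(t)$. Everything else is bookkeeping with Fano and chain rules.
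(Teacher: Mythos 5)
Your proposal is correct and follows essentially the approach the paper relies on: the paper states this lemma without proof, importing it from \cite{Yang:2012}, and your Fano-plus-chain-rule derivation with the genie $(y_1^n,W_1)$ supplied to RX-2 is precisely the standard argument behind that citation. You also correctly isolate the one genuinely delicate step---the conditional independence of $(y_1(t),y_2(t))$ from the future channel states and estimates given $\Uc(t),\Hm(t)$, which is indispensable for the negatively-signed term $-2h(y_1^n\,|\,W_1,\mathcal{F})$, where ``conditioning reduces entropy'' alone points in the wrong direction.
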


\begin{lemma} [extremal inequality \cite{Yang:2012}]
  The weighted difference of two differential entropies above can be further bounded as
  \begin{align}
   & h({y}_1(t),{y}_2(t)|\Uc(t),\Hm(t)) - 2 h( {y}_1(t)|\Uc(t),\Hm(t)) \nn \\
  &\le \E_{\hat{\hv}_1^{[1]},\hat{\hv}_1^{[2]},\hat{\hv}_2^{[1]},\hat{\hv}_2^{[2]}} \max_{\Km \succeq 0, \trace (\Km) \le P} \left( \E_{\tilde{\hv}_1^{[1]},\tilde{\hv}_1^{[2]}} \log (1+\hv_1^\H \Km \hv_1) \right. \nn \\
  &~~~~~~~~~~~~~~~~~~~~~~~~~~~~~\left.- \E_{\tilde{\hv}_2^{[1]},\tilde{\hv}_2^{[2]}} \log (1+\hv_2^\H \Km \hv_2) \right) \label{eq:diff-bounds} \\
  & \le \max\{\alpha_1, \alpha_2\} \log P +  O(1)
\end{align}
where $\Km$ is the covariance matrix of channel input.
\end{lemma}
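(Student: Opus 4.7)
The plan is to establish Lemma 2 in two steps: (i) pass from the raw entropy difference to the expression in the middle of the displayed chain, which is of extremal-inequality form; and (ii) evaluate that expression to obtain the $\max\{\alpha_1,\alpha_2\}\log P + O(1)$ bound. Step (i) is an adaptation of the single-cell extremal inequality of Yang et al.~(2012); step (ii) is a direct computation.

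For step (i), the key observation is that the four current estimates $\{\hat{\hv}_1^{[1]},\hat{\hv}_1^{[2]},\hat{\hv}_2^{[1]},\hat{\hv}_2^{[2]}\}$ at time $t$ are contained in $\Uc(t)$. Conditioning on them together with $\Hm(t)$ decomposes the channels as $\hv_j(t) = \hat{\hv}_j^{[k]}(t) + \tilde{\hv}_j^{[k]}(t)$, with deterministic estimates and Gaussian residual errors $\tilde{\hv}_j^{[k]}$ that are independent of $\xv(t)$, and with a time-$t$ input covariance $\Km_t$ satisfying $\trace(\Km_t) \le P$. I would then apply a vector extremal inequality of the Weingarten--Liu--Shamai / Liu--Viswanath type, in the form used by Yang et al.~for the single-cell analogue, to argue that for fixed conditioning the weighted entropy difference $h(y_1,y_2|\Uc,\Hm) - 2h(y_1|\Uc,\Hm)$ is maximized by a Gaussian input with covariance $\Km_t$. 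Upper bounding by the pointwise maximum over $\Km$ and averaging over the estimates yields the middle expression in the lemma.

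For step (ii), I need to upper bound $D(\Km) \defeq \E_{\tilde{\hv}_1^{[1]},\tilde{\hv}_1^{[2]}}\log(1+\hv_1^\H\Km\hv_1) - \E_{\tilde{\hv}_2^{[1]},\tilde{\hv}_2^{[2]}}\log(1+\hv_2^\H\Km\hv_2)$ uniformly in $\Km \succeq 0$ with $\trace(\Km)\le P$. Conditioning on \emph{both} estimates of each channel, the effective residual error variance for $\hv_j$ is of order $P^{-\max\{\alpha_1,\alpha_2\}}$; write $\hv_j = \hat{\hv}_j + \tilde{\hv}_j$ with $\E\Norm{\tilde{\hv}_j} \sim P^{-\max\{\alpha_1,\alpha_2\}}$. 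The first term of $D(\Km)$ is bounded above by $\log(1+\lambda_{\max}(\Km)\Norm{\hv_1}) \le \log P + O(1)$, since $\lambda_{\max}(\Km)\le\trace(\Km)\le P$ and $\Norm{\hv_1}=O(1)$ almost surely. For the second term, using $\hv_2^\H\Km\hv_2 \ge \tilde{\hv}_2^\H\Km\tilde{\hv}_2$ and the standard Gaussian-quadratic-form identity (which is essentially tight in the log-scale regime because the quadratic form concentrates), I obtain $\E_{\tilde{\hv}_2}\log(1+\hv_2^\H\Km\hv_2) \ge \log(1+\trace(\Km)\cdot P^{-\max\{\alpha_1,\alpha_2\}}) - O(1) \ge (1-\max\{\alpha_1,\alpha_2\})\log P - O(1)$ whenever $\trace(\Km)$ scales as $P$. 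Subtracting gives $D(\Km)\le \max\{\alpha_1,\alpha_2\}\log P + O(1)$ uniformly, and averaging over the estimates preserves the bound.

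The main obstacle is step (i): faithfully adapting the Yang et al.\ extremal inequality to the two-TX, heterogeneous-CSIT setting. The inequality is typically established via a statistical-dominance condition on the Gaussian noise/error structure, which here must be verified for the joint conditional law of $(\tilde{\hv}_1^{[1]},\tilde{\hv}_1^{[2]},\tilde{\hv}_2^{[1]},\tilde{\hv}_2^{[2]})$ given $\Uc(t)$, where the inputs $\xv(t)$ now depend on past heterogeneous channels via the encoding functions $f_k$. The argument goes through because these residual errors are jointly Gaussian with (conditionally) diagonal covariance, independent of $\xv(t)$, and the best of the two estimates of each channel has variance $P^{-\max\{\alpha_1,\alpha_2\}}$, which is precisely the scaling that the extremal inequality needs to produce the right-hand side, with the worse estimate absorbed harmlessly into the $O(1)$ terms.
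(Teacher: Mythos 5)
Your overall architecture matches the paper's: the first inequality of the lemma (entropy difference bounded by the extremal expression) is taken over from Yang et al.\ with only notational changes, and the substance lies in evaluating that expression. Your key observation --- that conditioning on both estimates of each channel leaves a residual error of variance $P^{-\max\{\alpha_1,\alpha_2\}}$ governed by the \emph{better} estimate --- is exactly the paper's $\min\{\sigma_{kk}^{2},\sigma_{jk}^{2}\}=\sigma_{kk}^{2}\sim P^{-\max\{\alpha_1,\alpha_2\}}$, and it is the only genuinely new ingredient relative to the single-cell case.

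Your step (ii), however, has two concrete defects. First, the pointwise inequality $\hv_2^\H\Km\hv_2\ge\tilde{\hv}_2^\H\Km\tilde{\hv}_2$ is false for general $\Km\succeq 0$ (take $\hat{\hv}_2=-\tilde{\hv}_2$); it only survives in expectation over the zero-mean error, and the paper instead invokes the Gaussian-quadratic-form lower bound of Yang et al., $\E \log(1+\hv_2^\H\Km\hv_2)\ge\log(1+2^{\gamma}\sigma_{kk}^{2}\lambda_1)+O(1)$, with $\lambda_1$ the largest eigenvalue of $\Km$ and $\gamma$ a finite constant. Second, your subtraction produces $\max\{\alpha_1,\alpha_2\}\log P$ only ``whenever $\trace(\Km)$ scales as $P$,'' yet the bound must hold uniformly over all $\Km$ with $\trace(\Km)\le P$, including covariances with $\trace(\Km)\sim P^{\beta}$, $\beta<1$; having already discarded the $\Km$-dependence of the first term via the crude bound $\log P+O(1)$, you cannot close that regime without an extra case analysis. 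The paper sidesteps both issues at once by retaining $\lambda_1$ in \emph{both} bounds --- upper bound $\log(1+\max\{\norm{\hat{\hv}_1^{[1]}}^2,\norm{\hat{\hv}_1^{[2]}}^2\}\lambda_1)+O(1)$ and lower bound $\log(1+2^{\gamma}\sigma_{kk}^{2}\lambda_1)+O(1)$ --- and then applying $\log\left(\frac{1+a}{1+b}\right)\le\log\left(1+\frac{a}{b}\right)$, which cancels $\lambda_1$ uniformly and yields $-\log(\sigma_{kk}^{2})+O(1)=\max\{\alpha_1,\alpha_2\}\log P+O(1)$. Both defects are repairable (keep $\lambda_1$ rather than replacing it by $P$, and use the expectation-level quadratic-form bound), but as written the argument does not close.
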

\begin{proof}
  The proof can be straightforwardly extended from \cite{Yang:2012} but with a little modification. The upper bound takes the looser form as in \cite{Yang:2012}, i.e.,
  \begin{align}
  &\E_{\tilde{\hv}_1^{[1]},\tilde{\hv}_1^{[2]}} \log (1+\hv_1^\H \Km \hv_1)  \\
  &\le \log (1+\max\{\norm{\hat{\hv}_1^{[1]}}^2,\norm{\hat{\hv}_1^{[2]}}^2\} \lambda_1) + O(1)
  \end{align}
  while the lower bound can be further loosed as
  \begin{align}
    &\E_{\tilde{\hv}_2^{[1]},\tilde{\hv}_2^{[2]}} \log (1+\hv_2^\H \Km \hv_2)  \\
    &\ge \log (1+2^{\gamma} \min\{\sigma_{kk}^{2},\sigma_{jk}^{2}\} \lambda_1) + O(1)  \\
    & = \log (1+2^{\gamma} \sigma_{kk}^{2} \lambda_1) + O(1)
  \end{align}
  where $j \ne k$, $\sigma_{kk}^{2} \sim P^{-\max \{\alpha_{1},\alpha_{2}\}}$, and $\gamma$ defined in \cite{Yang:2012} is finite. Substituting the above upper and lower bounds into \eqref{eq:diff-bounds}, we obtain
  \begin{align}
    &\E_{\tilde{\hv}_1^{[1]},\tilde{\hv}_1^{[2]}} \log (1+\hv_1^\H \Km \hv_1) - \E_{\tilde{\hv}_2^{[1]},\tilde{\hv}_2^{[2]}} \log (1+\hv_2^\H \Km \hv_2) \nn\\
    & \le \log (1+\max\{\norm{\hat{\hv}_1^{[1]}}^2,\norm{\hat{\hv}_1^{[2]}}^2\} \lambda_1) -  \log (1+2^{\gamma} \sigma_{kk}^{2} \lambda_1) \label{eq:logineq}\\
    &\le \log \left(1+ \frac{\max\{\norm{\hat{\hv}_1^{[1]}}^2,\norm{\hat{\hv}_1^{[2]}}^2\}}{2^{\gamma} \sigma_{kk}^{2}}\right) + O(1)\\
    &\le -\log(\sigma_{kk}^{2}) + O(1) \label{eq:sigmaless1}
  \end{align}
  where \eqref{eq:logineq} follows the inequality $\log\left(\frac{1+a}{1+b}\right) \le \log\left(1+\frac{a}{b}\right)$, $\forall~a,b>0$, and \eqref{eq:sigmaless1} is obtained by noticing $0 \le \sigma_{kk}^{2} \le 1$. This completes the proof.
\end{proof}

Hence, we have
  \begin{align}
  2R_1 + R_2 &\le (2+\max\{\alpha_1, \alpha_2\}) \log P + O(1) + \epsilon_n \nn
  \end{align}
The weighted DoF bound can be obtained according to the definition. It applies to another bound by swapping the role of RX-1 and RX-2.

}
\end{document}